\newtheorem{proposition}{Proposition}
\newtheorem{theorem}{Theorem}
\def\BibTeX{{\rm B\kern-.05em{\sc i\kern-.025em b}\kern-.08em
    T\kern-.1667em\lower.7ex\hbox{E}\kern-.125emX}}
\begin{document}
\history{Date of publication xxxx 00, 0000, date of current version xxxx 00, 0000.}
\doi{10.1109/TQE.2020.DOI}

\title{Quantum Rotation Diversity in Displaced Squeezed Binary Phase-Shift Keying}
\author{\uppercase{Ioannis Krikidis} \IEEEmembership{Fellow, IEEE}}
\address{Department of Electrical and Computer Engineering, University of Cyprus, Cyprus (email: krikidis@ucy.ac.cy)}
\tfootnote{This work received funding from the European Research Council (ERC) under the European Union's Horizon Europe programme (ERC Proof of Concept QUARTO, Grant Agreement No. 101241675) and from the European Union's Horizon Europe programme (FOCAL project, Grant Agreement No. 101225859).}


\markboth
{I. Krikidis: Quantum Rotation Diversity in Displaced Squeezed Binary Phase-Shift Keying}
{I. Krikidis: Quantum Rotation Diversity in Displaced Squeezed Binary Phase-Shift Keying}

\corresp{Corresponding author: Ioannis Krikidis (email: krikidis@ucy.ac.cy).}

\begin{abstract}
We propose a quantum rotation diversity (QRD) scheme for optical quantum communication using binary phase-shift-keying displaced squeezed states and homodyne detection over Gamma--Gamma turbulence channels. Consecutive temporal modes are coupled by a passive orthogonal rotation that redistributes the displacement amplitude between slots, yielding a diversity order of two under independent fading and joint maximum-likelihood detection. Analytical expressions for the symbol-error rate performance, along with asymptotic results for the diversity and coding gains, are derived. The optimal rotation angle and energy allocation between displacement and squeezing are obtained in closed form. Furthermore, we show that when both the displacement amplitude and the squeezing strength scale with the total photon number, an effective diversity order of four is achieved. Numerical results validate the analysis and demonstrate the super-diversity behaviour of the proposed QRD scheme.
\end{abstract}

\begin{keywords}
Rotation diversity, quantum optical communications, BPSK, homodyne receiver, ML, diversity, coding gain.
\end{keywords}

\titlepgskip=-15pt

\maketitle

\section{Introduction}

\PARstart{Q}{uantum} optical communication is an emerging technology that aims to unlock non-classical capabilities by exploiting fundamental principles of quantum mechanics, such as entanglement, superposition, and quantum measurement \cite{DJO}. It employs photons (light quanta) to encode classical information into quantum states of light, which are transmitted through optical fibers or free-space optical (FSO) channels. Specifically, quantum FSO communication combines the advantages of quantum optical signal processing with the flexibility of the wireless medium and is relevant to application scenarios such as ground-to-satellite and satellite-to-ground optical links, as well as terrestrial FSO systems operating under strong atmospheric turbulence or stringent power constraints. More broadly, quantum FSO communication constitutes an important physical-layer building block for emerging technologies such as quantum-enhanced sensing and security (quantum key distribution), low-power optical Internet of Things, and hybrid classical-quantum communication networks \cite{DJO2, LAZ}. In these scenarios, deep channel fades and limited received optical power can drive the system into the photon-limited regime, where performance is fundamentally constrained by quantum noise rather than classical thermal noise.

Motivated by these application scenarios, this communication paradigm has recently gained significant attention within communication theory and information processing. For instance, the study in \cite{JUN} investigates coding techniques for wireless quantum optical communications over multipath channels, while the work in \cite{KRI} proposes a batteryless wireless quantum optical link and analyzes its fundamental performance limits. The study in \cite{BHA} examines a wireless quantum optical relay channel over Gamma--Gamma turbulence and evaluates its performance in terms of the Helstrom error bound. Finally, the work in \cite{CHE} investigates how optical squeezing can enhance the discrimination of binary phase-shift-keying (BPSK) quantum signals under noise, demonstrating superior detection performance compared to conventional coherent-state schemes. 

On the other hand, transmit diversity is a fundamental concept in wireless communication theory that enhances the reliability of communication systems under fading conditions. The key idea is to transmit the same information through multiple independent paths, thereby increasing the probability that at least one copy is received reliably. Diversity can be implemented using various techniques, {\it e.g.}, time, frequency, or space diversity \cite{TSE}. The incorporation of this principle into quantum optical communications opens a new research direction toward more efficient and robust quantum optical links, particularly under FSO channel impairments. Existing works rely on the principle of repetition coding, where the same information is transmitted through parallel and independent channels ({\it e.g.}, spatial modes) and combined at the receiver to achieve diversity gains \cite{CHA}. However, these approaches improve reliability at the expense of reduced communication throughput. The design of diversity techniques for quantum optical communications that preserve throughput remains an open research problem in the literature.

Motivated by recent advances in wireless communication theory, this work investigates a novel transmit diversity scheme for wireless quantum optical communications that achieves diversity gains while preserving throughput.
Building upon the concept of signal-space (modulation) diversity \cite{BOU}, which provides robustness against fading via constellation rotation in classical wireless communications, we introduce the quantum rotation diversity (QRD) scheme, where consecutive modulated quantum states are jointly rotated to enable temporal diversity without introducing redundancy. Specifically, the contributions of this paper are twofold:
\begin{itemize}
\item We consider a BPSK quantum optical link employing displaced squeezed states \cite{CHE} and homodyne detection over Gamma--Gamma turbulence channels, which model amplitude fading in FSO links. The proposed QRD scheme applies a passive orthogonal rotation across consecutive temporal modes, redistributing the displacement amplitude over two time slots in a manner analogous to constellation rotation in \cite{BOU}. At the receiver, joint maximum-likelihood (ML) detection is employed across the two time slots to exploit independent fading realizations and enable temporal modulation diversity without introducing redundancy, while fully preserving the transmission rate.
\item An analytical framework is developed for the symbol-error rate (SER) performance, including exact and asymptotic expressions that characterize the achievable diversity and coding gains. Analytical and numerical results demonstrate that the proposed QRD scheme attains a diversity order of two under independent block fading, which is doubled compared to conventional no-rotation schemes. The optimal power allocation between displacement and squeezing is formulated as an optimization problem that minimizes the SER, and closed-form solutions are derived for both the non-asymptotic and asymptotic regimes. Furthermore, we show that the achievable diversity order increases to four when both the displacement amplitude and the squeezing strength asymptotically scale with the photon number, revealing a super-diversity behavior.
\end{itemize}

\noindent {\it Notation:} 
$\mathrm{B}(\cdot)$, $\Gamma(\cdot)$, and $K_{\nu}(\cdot)$ denote the Beta, Gamma, and modified Bessel functions of the second kind, respectively \cite{GRA};  $Q(\cdot)$ denotes the Gaussian $Q$-function. 
Bold lowercase letters $\pmb{x}$ denote column vectors, while bold uppercase letters $\pmb{X}$ denote matrices; $\pmb{I}_n$ represents the identity matrix of dimension $n\times n$, $(\cdot)^{\mathsf{T}}$ denotes the transpose operation, $\mathbb{E}(\cdot)$ denotes the expectation operator, and $\mathcal{N}(\mu,\sigma^2)$ represents the Gaussian probability density function (PDF) with mean $\mu$ and variance $\sigma^2$.

\begin{figure*}
	\centering
	\includegraphics[width=0.8\linewidth]{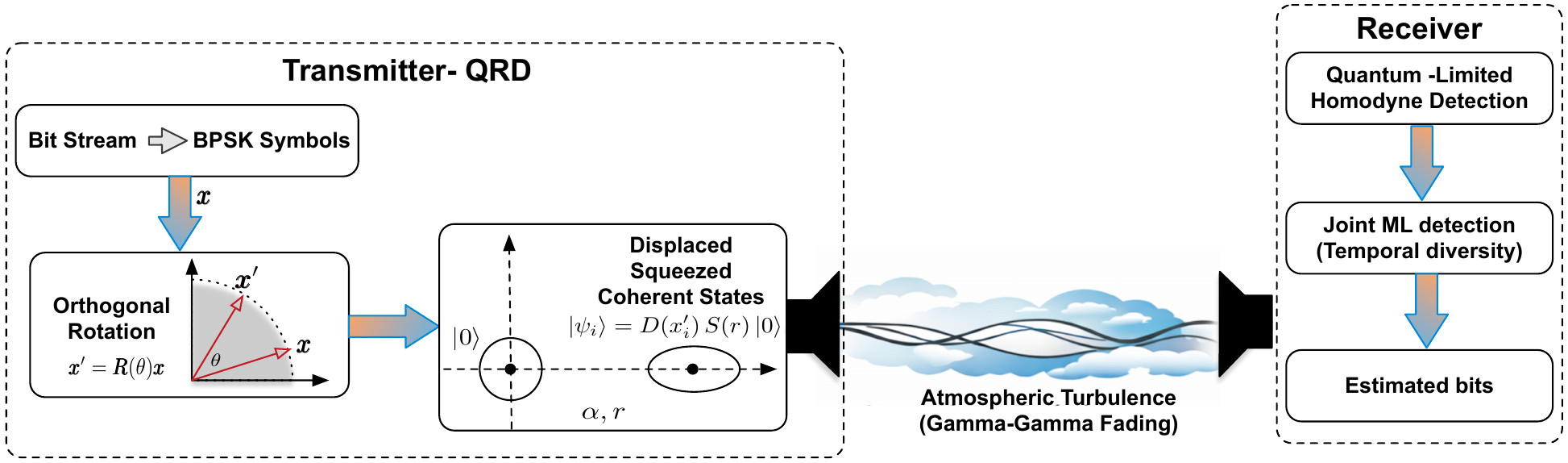}
	\caption{System model of the proposed QRD-based quantum FSO communication scheme.}
	\label{fig0}
\end{figure*}

\section{System model and the proposed QRD scheme}

We consider a binary quantum optical communication link in which the transmitter encodes information using BPSK-modulated displaced squeezed states, and the receiver employs homodyne detection for symbol discrimination. To enhance link reliability over turbulent FSO channels, we propose the QRD scheme, which operates across two consecutive time slots. In this scheme, a rotation matrix is applied to pairs of BPSK symbols at the transmitter, while the receiver performs joint homodyne-based detection over the two slots. This design achieves a diversity order of two, thereby significantly improving robustness against FSO amplitude fading. The diversity gain arises because each information symbol is spread across two independently faded time slots and jointly detected, so that an error occurs only when both fading realizations are simultaneously weak. Specifically, we consider two consecutive BPSK symbols $x_1,x_2\in\{+\alpha,-\alpha\}$ with $\alpha \in \mathbb{R}$, and the transmitter applies a real orthonormal rotation to the symbol pair to introduce modulation diversity across two time slots \cite{BOU} {\it i.e.,}
\begin{equation}
	\pmb{x}'=\pmb{R}(\theta)\pmb{x},
	\;\;\textrm{with}\;\;
	\pmb{R}(\theta)=
	\begin{bmatrix}
		\cos\theta & -\sin\theta \\
		\sin\theta & \phantom{-}\cos\theta
	\end{bmatrix},
	\label{eq:rotation}
\end{equation}
where $\pmb{x}'=[x_1'\; x_2']^\mathsf{T}$, $\pmb{x}=[x_1\;x_2]^\mathsf{T}$, and $\theta \in [0,\,\pi/4]$\footnote{The rotation angle $\theta$ is restricted to the interval $[0,\,\pi/4]$, since the orthogonal rotation matrix exhibits 
	$\pi/2$- periodic symmetry, and rotations beyond this range yield geometrically equivalent constellations with identical performance.}. Therefore, the transmitter emits rotated combinations of the original BPSK symbols,
\begin{equation}
	x'_1=x_1\cos\theta - x_2\sin\theta, \qquad
	x'_2=x_1\sin\theta + x_2\cos\theta.
\end{equation}
The rotation preserves energy ({\it i.e.,}
$|x'_1|^2+|x'_2|^2=|x_1|^2+|x_2|^2$), and spreads the information of each
symbol across both time slots. Each rotated amplitude $x'_i$ is then used to modulate a single-mode displaced  squeezed state at the optical transmitter. The $i$-th transmitted quantum state is given by
\begin{equation}
	\ket{\psi_i} = D(x'_i)\,S(r)\ket{0},
	\label{eq:state}
\end{equation}
where $\ket{0}$ is the vacuum state, $D(\alpha)= \exp\!\left(\alpha\hat{a}^\dagger - \alpha^*\hat{a}\right)$ and $S(r) = \exp\!\left[\tfrac{1}{2}r\!\left(\hat{a}^2-\hat{a}^{\dagger2}\right)\right]$ denote the displacement and squeezing operators, respectively, $r \in \mathbb{R}$, $\hat a$ and $\hat a^\dagger$ denote the annihilation and creation operators, respectively, satisfying the canonical commutation relation
$[\hat{a},\hat{a}^\dagger]=1$ \cite{DJO}. The displacement amplitude $x'_i$ encodes  the BPSK symbol information, while the real squeezing parameter $r$ controls the  quantum noise reduction along the measured quadrature. Let $N=\alpha^2+\sinh^2(r)$ represents the total average photon number available per transmitted symbol, defining the per-symbol energy budget shared between displacement and squeezing ({\it i.e.,} $ \alpha^{2}$ and $\sinh^{2}(r)$ represent the average number of photons contributed by the displacement and squeezing processes, respectively \cite{CHE}).

The optical signal is transmitted through a classical FSO channel 
subject to atmospheric turbulence, which induces random amplitude fluctuations  on the received field \cite{BHA}. The irradiance (or intensity) fading is  as a  Gamma--Gamma distributed random variable, whose PDF is given by
\begin{equation}
	f_I(z) = 
	\frac{2(\epsilon \zeta)^{\frac{\epsilon+\zeta}{2}}}
	{\Gamma(\epsilon)\Gamma(\zeta)}\,
	z^{\frac{\epsilon+\zeta}{2}-1}
	K_{\epsilon-\zeta}\!\left(2\sqrt{\epsilon \zeta z}\right),
	\qquad z>0,
	\label{eq:gg_pdf}
\end{equation}
where the shape parameters $\epsilon,\zeta>0$  with $\epsilon\neq \zeta$ represent the large-scale (beam wander) and small-scale (scintillation) turbulence effects, respectively. Although Gamma--Gamma turbulence is adopted for analytical tractability, the proposed QRD framework is generic and can, in principle, be applied to other fading models in FSO communications.

At the receiver, homodyne detection is employed to measure the quadrature operator of the input field, defined as $\hat{x}_{\phi} = \frac{1}{\sqrt{2}}(\hat{a}e^{-i\phi} + \hat{a}^{\dagger}e^{i\phi}),$ where $\phi$ denotes the phase of the measured quadrature. Since real-valued displacement and squeezing parameters are assumed, the homodyne measurement corresponds to $\phi = 0$, \textit{i.e.}, to the in-phase quadrature of the field. Under this assumption, the output of the homodyne detector at the $i$-th time slot (with $i=1,2$) can be expressed as
\begin{equation}
	y_i = \sqrt{\eta I_i}\,x'_i + n_{q,i},
	\label{eq:homodyne_output}
\end{equation}
where $\eta\in[0,1]$ denotes the normalized effective path loss, $I_i$ are independent and identically distributed (i.i.d) Gamma--Gamma irradiance coefficients, and $n_{q,i}$ represents the
quantum noise\footnote{For simplicity, we consider only the effects of quantum noise \cite{CHE}; additional thermal or background noise would mainly affect the effective noise variance and is not expected to alter the qualitative conclusions or asymptotic trends of the analysis.} term, modelled as a real Gaussian random variable
$n_{q,i}\sim\mathcal{N}\!\left(0,\,\sigma_q^2\right)$ with variance $\sigma_q^2 = \tfrac{1}{2} e^{-2r}$. The received two-dimensional observation vector corresponding to the pair of
time slots is given by
\begin{equation}
	\pmb{y}=\pmb{H}\pmb{x}+\pmb{n}_q,
\end{equation}
where $\pmb{H}=\sqrt{\eta}
\begin{bmatrix}
	\!\sqrt{I_1} & 0\\
	0 & \sqrt{I_2}\!
\end{bmatrix}
\pmb{R}(\theta)$, $\pmb{y} = [y_1\;y_2]^\mathsf{T}$, and $\pmb{n}_q =
[n_{q,1}\;n_{q,2}]^\mathsf{T}$ is the quantum noise vector with covariance
$\mathbb{E}[\pmb{n}_q\pmb{n}_q^T] = \sigma_q^2 \pmb{I}_2$.

\vspace{3pt}
\noindent {\it ML detection:}
Given the received vector $\pmb{y}$ and the instantaneous channel
coefficients $(I_1,I_2)$, the receiver performs joint detection over the two
time slots by solving the ML decision rule
\begin{equation}
	(\hat{x}_1,\hat{x}_2)
	= \arg\min_{(x_1,x_2)\in\{\pm\alpha\}^2}
	\left\|
	\pmb{y}
	-\pmb{H}\pmb{x}
	\right\|^{2}.
	\label{eq:ml_rule}
\end{equation}
This joint ML detector processes both time slots simultaneously, exploiting
the independent fading realizations $(I_1,I_2)$ to achieve a diversity order
of two. For $\theta=0$, the expression in \eqref{eq:ml_rule} reduces to conventional symbol-by-symbol detection, while non-zero $\theta$ enables cross-slot combination and diversity gain. Fig.~\ref{fig0} schematically illustrates the system model and the main building blocks of the proposed QRD scheme.

\noindent{\it Remarks:} In this work, we assume perfect channel state information (CSI) at the receiver; accordingly, the ML detector in \eqref{eq:ml_rule} assumes access to the instantaneous irradiance coefficients of the Gamma--Gamma channel. Since these coefficients are classical channel parameters, they can be estimated using conventional pilot-assisted channel estimation algorithms ({\it e.g.}, maximum-likelihood, least-squares, or linear minimum mean-square-error estimation) \cite[Ch.~10]{HAM}, without conflicting with quantum measurement constraints. The investigation of imperfect CSI scenarios or non-coherent detection schemes is left for future work.

It is also worth noting that homodyne detection does not implement the Helstrom-optimal positive operator-valued measure (POVM) \cite{DJO} for displaced squeezed states, and therefore the resulting receiver is suboptimal from a quantum-detection perspective. In this context, squeezing improves performance by reducing the variance of the measured quadrature, which enhances the statistics of the homodyne observations and leads to improved ML detection performance.

As for the complexity of the proposed QRD scheme, the additional implementation overhead is limited to a fixed $2\times2$ orthogonal rotation at the transmitter and joint ML detection over a small, constant number of hypotheses at the receiver (four for BPSK), resulting in a negligible and bounded increase in complexity compared to the baseline no-rotation scheme.

\section{Error probability and asymptotic analysis}

In this section, we study the error probability performance of the proposed QRD scheme and quantify the resulting diversity/coding gains.

\subsection{PEP and SER performance}

We focus on the pairwise error probability (PEP), defined as the probability that the transmitter emits the codeword $\pmb{x}$ while the receiver decides in favour of an alternative codeword $\tilde{\pmb{x}}$. The conditional PEP (given the channel coefficients $I_1,I_2$) is written as \cite{TSE}
\begin{align}
	\mathrm{PEP}(\pmb{x}\!\to\!\tilde{\pmb{x}}\mid I_1,I_2)
	&= \Pr\left\{
	\left\|\pmb{y}-\pmb{H}\tilde{\pmb{x}}
	\right\|^{2} < \left\|
	\pmb{y}
	-\pmb{H}\pmb{x}
	\right\|^{2}
	\right\}, \nonumber \\
	&=Q\left(\sqrt{\frac{\|\pmb{H}(\pmb{x}-\tilde{\pmb{x}}) \|^2}{4\sigma_q^2}} \right).
	\label{eq:pep_def}
\end{align}
Let $\pmb{u}=\pmb{R}(\theta)(\pmb{x}-\tilde{\pmb{x}})=[u_1\;u_2]^\mathsf{T}$, then $\|\pmb{H}(\pmb{x}-\tilde{\pmb{x}}) \|^2=\eta(I_1u_1^2+I_2u_2^2)$. Therefore, \eqref{eq:pep_def} is written as
\begin{align}
	\mathrm{PEP}(\pmb{x}\!\to\!\tilde{\pmb{x}}\mid I_1,I_2)
	&= Q\!\left(\sqrt{\frac{\eta e^{2r}\big(I_1 u_1^2+I_2 u_2^2\big)}{2}}\right).
	\label{eq:pep_cond_simplified}
\end{align}
To compute the average PEP, we incorporate Craig's formula $Q(\sqrt{x})=\tfrac{1}{\pi}\int_0^{\pi/2}
\exp\!\big(-x/[2\sin^2\vartheta]\big)d\vartheta$ \cite{TSE} and we average over the independent Gamma--Gamma $I_1$, $I_2$ coefficients. We have
\begin{equation}
	\overline{\mathrm{PEP}}(\pmb{x}\rightarrow \tilde{\pmb{x}})=
	\frac{1}{\pi}\int_{0}^{\pi/2}
	\mathcal L_I\!\left(\frac{\eta e^{2r} u_1^2}{4\,\sin^2\vartheta}\right)
	\mathcal L_I\!\left(\frac{\eta e^{2r} u_2^2}{4\,\sin^2\vartheta}\right)
	\,d\vartheta,
	\label{eq:pep_avg}
\end{equation}
where $\mathcal{L}_I(s)=\mathbb{E}_I[e^{-sI}]$ denotes the Laplace transform of the Gamma--Gamma distributed random variable $I$. A closed-form expression for $\mathcal{L}_I(s)$ is provided in Appendix A, derived in terms of the confluent hypergeometric function.

By using the standard union bound on the ML symbol error probability, which upper-bounds the error probability by summing all pairwise error events, the SER can be approximated by the sum of the PEPs \cite[Sec. 3.2.2]{TSE}. Due to the inherent symmetry of the considered code, two distinct PEP values can be identified, corresponding to codeword pairs that differ in one or in two symbol positions. If $\overline{\mathrm{PEP}}_1$ and $\overline{\mathrm{PEP}}_2$ denote the average PEPs for codeword pairs differing in one and two positions, respectively, the SER can be approximated as
\begin{equation}
	P_s \approx 2\overline{\mathrm{PEP}}_1 + \overline{\mathrm{PEP}}_2,
\end{equation}
where the unequal weighting arises because two neighbouring codeword pairs differ in a single symbol position, while only one pair differs in two positions.

To investigate the impact of the design parameters (the displacement amplitude $\alpha$, the squeezing parameter $r$, and the rotation angle $\theta$) on the overall system performance, we introduce the following optimization problem:
\begin{align}
	[P1]\;\; &(\alpha^*, r^*, \theta^*) = \arg\min_{\alpha,r,\theta} P_s \nonumber \\
	&\text{subject to} \quad \alpha^2 + \sinh^2(r) \le N~.
\end{align}
Although a closed-form expression for $P_s$ is generally analytically intractable, the optimal design can be inferred from the fundamental properties of the Laplace transform, as established in the following Proposition.

\begin{proposition}\label{propo}
	For the proposed QRD BPSK over i.i.d Gamma--Gamma fading with shapes $\epsilon,\zeta$ and $\epsilon\neq \zeta$, the optimal design that minimizes SER is attained at $\theta^\star=\frac{1}{2}\arctan(2)$, $r^\star=\tfrac{1}{2}\ln(2N+1)$, and $\alpha^\star=\sqrt{\tfrac{N(N+1)}{2N+1}}$. For the optimal configuration $P_s\approx 3\overline{PEP}_1$ with $\overline{PEP}_1=\overline{PEP}_2$.
\end{proposition}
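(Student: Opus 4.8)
The plan is to exploit a separable structure: the allocation $(\alpha,r)$ can be optimized independently of $\theta$, and only afterwards is $\theta$ fixed by a coding-gain argument. First I would compute the two difference images $\pmb{u}=\pmb{R}(\theta)(\pmb{x}-\tilde{\pmb{x}})$. A single-symbol error gives $(u_1^2,u_2^2)=4\alpha^2(\cos^2\theta,\sin^2\theta)$, and a double-symbol error gives $(u_1^2,u_2^2)=4\alpha^2(1-\sin2\theta,\,1+\sin2\theta)$. Substituting into \eqref{eq:pep_avg}, every argument of $\mathcal{L}_I$ has the form $\beta\,\phi/\sin^2\vartheta$, where $\beta:=\eta e^{2r}\alpha^2$ and $\phi\ge0$ depends on $\theta$ only. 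Since $\mathcal{L}_I$ is the Laplace transform of a positive random variable it is strictly decreasing, so each integrand, each $\overline{\mathrm{PEP}}_k$, and hence $P_s$ are nonincreasing in $\beta$. Therefore, for every fixed $\theta$, minimizing $P_s$ is equivalent to maximizing the $\theta$-free quantity $\beta$; this separation is the clean, rigorous core of the argument.

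Next I would solve $\max\,\eta e^{2r}\alpha^2$ subject to $\alpha^2+\sinh^2 r\le N$. The objective increases in $\alpha^2$, so the constraint is active and $\alpha^2=N-\sinh^2 r$. Writing $v=e^{2r}$ and using $\sinh^2 r=(v-1)^2/(4v)$, the objective becomes $g(v)=vN-(v-1)^2/4$; then $g'(v)=N-(v-1)/2=0$ yields $v^\star=2N+1$, i.e. $r^\star=\tfrac12\ln(2N+1)$, while back-substitution gives $\alpha^{\star2}=N-(2N)^2/[4(2N+1)]=N(N+1)/(2N+1)$, and $g''<0$ confirms a maximum. A direct check gives $\alpha^{\star2}+\sinh^2 r^\star=N$, so the budget is met with equality, independently of $\theta$.

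For the rotation angle I would pass to the high-SNR regime using the Laplace-transform asymptotics of Appendix A, $\mathcal{L}_I(s)\sim C\,s^{-\mu}$ with $\mu=\min(\epsilon,\zeta)$. Carrying the two factors through \eqref{eq:pep_avg} gives $\overline{\mathrm{PEP}}_k\propto\beta^{-2\mu}P_k^{-\mu}$, i.e. full two-branch diversity with a coding gain set by the product distances $P_1=\tfrac14\sin^2 2\theta$ (single-error event) and $P_2=\cos^2 2\theta$ (double-error event), which obey the linear relation $4P_1+P_2=1$. The endpoints are degenerate: $\theta=0$ sends $P_1\to0$ and $\theta=\pi/4$ sends $P_2\to0$, each collapsing one error event to diversity one, so the optimum is interior and rotation is essential. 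Because the asymptotic $P_s$ is dominated by the smaller product distance, I would apply the max-min (balanced coding-gain) criterion: maximizing $\min(P_1,P_2)$ forces $P_1=P_2$, i.e. $\tfrac14\sin^2 2\theta=\cos^2 2\theta$, giving $\tan 2\theta=2$ and $\theta^\star=\tfrac12\arctan 2$. At this angle $P_1=P_2=\tfrac15$, so the two events share the same prefactor and exponent, $\overline{\mathrm{PEP}}_1=\overline{\mathrm{PEP}}_2$, whence $P_s\approx 2\overline{\mathrm{PEP}}_1+\overline{\mathrm{PEP}}_2=3\overline{\mathrm{PEP}}_1$.

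The main obstacle is precisely this last step. The $(\alpha^\star,r^\star)$ result is exact and rests only on monotonicity of $\mathcal{L}_I$, but the exact finite-SNR minimizer of the weighted sum $2\overline{\mathrm{PEP}}_1+\overline{\mathrm{PEP}}_2$ does not coincide with the balanced point: a Lagrange computation on $2P_1^{-\mu}+P_2^{-\mu}$ under $4P_1+P_2=1$ yields $P_2/P_1=2^{1/(\mu+1)}\neq1$, which is $\epsilon,\zeta$-dependent. The clean, turbulence-independent value $\theta^\star=\tfrac12\arctan 2$ must therefore be justified as the coding-gain-optimal (max-min product-distance) design governing the dominant error event at high SNR, rather than as the literal minimizer of the finite-SNR union bound, and I would state this asymptotic sense explicitly. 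As a supporting step I would also establish that $\overline{\mathrm{PEP}}_1$ is decreasing and $\overline{\mathrm{PEP}}_2$ increasing on $(0,\pi/4)$ — using log-convexity of $\mathcal{L}_I$, so that a more balanced energy split at fixed sum lowers the pairwise error — which guarantees that the balancing angle exists and is unique.
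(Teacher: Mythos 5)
Your proposal is correct and reaches the paper's conclusions, but it is worth comparing the two routes. For the energy split your argument is essentially the paper's: both reduce the problem to maximizing $e^{2r}\alpha^2$ under the photon budget (the paper via monotonicity of $\mathcal L_I$, you via the same observation phrased as a separation lemma), and your substitution $v=e^{2r}$, $\sinh^2 r=(v-1)^2/(4v)$ is a cleaner equivalent of the paper's parametrization $\beta=\sinh^2(r)/N$, which yields $\beta^*=N/(2N+1)$ and hence the same $r^\star,\alpha^\star$. The difference lies in the rotation angle. The paper stays at finite SNR: it invokes log-convexity of $\mathcal L_I$ to argue that, for fixed $s_1+s_2$, each class's PEP decreases as the product $s_1s_2\propto u_1^2u_2^2$ grows, and then \emph{balances the two products} ($\sin^2 2\theta=4\cos^2 2\theta$) as an explicit min-max design. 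You instead pass to the high-SNR power-law $\mathcal L_I(s)\sim \Lambda s^{-g}$ and balance the product distances there; the resulting equation and $\theta^\star=\tfrac12\arctan 2$ are identical. Your asymptotic framing is arguably the more honest one: since the two error classes have different sums ($u_1^2+u_2^2=4\alpha^2$ for single-symbol errors but $8\alpha^2$ for double-symbol errors --- the paper's claim that the sum is always $4\alpha^2$ holds only for the first class), equal products do \emph{not} give exactly equal PEPs at finite SNR, so $\overline{\mathrm{PEP}}_1=\overline{\mathrm{PEP}}_2$ is genuinely an asymptotic statement, which is exactly how you treat it. Your final caveat --- that the literal minimizer of $2\overline{\mathrm{PEP}}_1+\overline{\mathrm{PEP}}_2$ satisfies $P_2/P_1=2^{1/(g+1)}\neq 1$, so $\theta^\star$ is max-min optimal rather than union-bound optimal --- is a correct observation, but note that it is a criticism of the Proposition's wording rather than a gap in your own argument: the paper's proof likewise only establishes the min-max ("worst-PEP") optimality and never claims more in the appendix. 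Your supplementary monotonicity argument for $\overline{\mathrm{PEP}}_1$ and $\overline{\mathrm{PEP}}_2$ via Schur-convexity is a useful addition that the paper implicitly relies on but does not state.
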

\begin{proof}
	The proof is provided in Appendix~\ref{apB}.
\end{proof}

\setcounter{theorem}{0}
\subsection{Asymptotic analysis}
For high signal-to-noise ratios (SNRs)\footnote{In photon-limited quantum optical links with homodyne detection, the total mean photon number per symbol $N$ represents the available energy budget and serves as a standard proxy for the effective SNR.}, \textit{i.e.}, as $N \!\to\! \infty$, the $P_s$ expression admits a simplified asymptotic form that enables the analytical evaluation of fundamental communication-theoretic metrics such as the \emph{diversity gain} and the \emph{coding gain}\footnote{In the asymptotic expression $\mathrm{SER} = C\,\mathrm{SNR}^{-d}$, the parameter $d$ denotes the diversity gain (slope of the error curve in a log--log scale), while the constant $C$ determines the horizontal shift of the curve and thus corresponds to the coding gain, defined as $G_c = C^{-1/d}$ \cite{TSE}.}. The following theorem presents the diversity and coding gain for the proposed QRD scheme.
\begin{theorem}[High-SNR diversity and maximum coding gain]\label{theo1}
	For the QRD BPSK over i.i.d.\ Gamma--Gamma fading with shapes $(\epsilon,\zeta)$ and $\epsilon\neq \zeta$, let
	$g=\min\{\epsilon,\zeta\}$ and define the scheme-dependent constant as
	\begin{align}
		C_1&=3\frac{2^{6g-1}(\eta \sin(2\theta^*))^{-2g}}{\pi}\,\frac{\Gamma(|\epsilon-\zeta|)^2\Gamma(g)^2}{\Gamma(\epsilon)^2\Gamma(\zeta)^2}\,(\epsilon\zeta)^{2g} \nonumber \\
		&\;\;\;\times B\Big(2g+\tfrac12,\,2g+\tfrac12\Big).
	\end{align}
	The diversity and coding gains (as $N\to\infty$) are given by 
	
	\textbf{Diversity order:}
	\begin{align}
		\boxed{\,d = -\lim_{N\rightarrow \infty} \frac{\log P_s(N)}{\log N}
			=\begin{cases}
				4g, & \text{for}\; r> 0,\\
				2g, & \text{for}\; r= 0
		\end{cases}}~.
	\end{align}
	
	\textbf{Maximum coding gain:}
	\begin{align}
		\boxed{G_{c,\max}=C_1^{-1/{d}}}~.
	\end{align}
\end{theorem}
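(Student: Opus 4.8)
The plan is to read the diversity and coding gains off the leading-order behaviour of the averaged PEP in \eqref{eq:pep_avg}, exploiting the fact that at high SNR both metrics are governed entirely by the tail of the Gamma--Gamma Laplace transform $\mathcal{L}_I(s)$ as $s\to\infty$. First I would establish the power law $\mathcal{L}_I(s)\sim A\,s^{-g}$ with $g=\min\{\epsilon,\zeta\}$. This follows from the behaviour of $f_I(z)$ near the origin: inserting $K_{\nu}(x)\sim\tfrac12\Gamma(|\nu|)(x/2)^{-|\nu|}$ into \eqref{eq:gg_pdf} gives the PDF tail $f_I(z)\sim \frac{(\epsilon\zeta)^{g}\Gamma(|\epsilon-\zeta|)}{\Gamma(\epsilon)\Gamma(\zeta)}\,z^{g-1}$ as $z\to0^+$, whence a Watson/Tauberian argument yields $A=\frac{(\epsilon\zeta)^{g}\Gamma(|\epsilon-\zeta|)\Gamma(g)}{\Gamma(\epsilon)\Gamma(\zeta)}$. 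The closed form of Appendix~A provides the same result through the large-argument limit of the confluent hypergeometric function.

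Next I would substitute this tail into the product in \eqref{eq:pep_avg}. For the one-position codeword difference $\pmb{u}=\pmb{R}(\theta)[\pm2\alpha,0]^{\mathsf T}$ has $u_1=2\alpha\cos\theta$, $u_2=2\alpha\sin\theta$, both strictly nonzero for $\theta\in(0,\pi/4]$, and the two-position difference likewise gives two nonzero entries. This is exactly where the rotation is decisive: with $\theta=0$ one entry vanishes, $\mathcal{L}_I(0)=1$, and only a single $s^{-g}$ factor survives. With both $u_i\neq0$ the integrand behaves like $A^2\big(4\sin^2\vartheta/[\eta e^{2r}]\big)^{2g}(u_1u_2)^{-2g}$, so that $\overline{\mathrm{PEP}}\sim (\text{const})\,(\eta e^{2r}u_1u_2)^{-2g}\int_0^{\pi/2}\sin^{4g}\vartheta\,d\vartheta$. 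I would evaluate the angular integral as $\tfrac12 B(2g+\tfrac12,\tfrac12)$ and recast it symmetrically as $2^{4g}\cdot\tfrac12 B(2g+\tfrac12,2g+\tfrac12)$ via the Legendre duplication formula, which absorbs the powers of two into the $2^{6g-1}$ prefactor of $C_1$.

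Then I would insert the optimal parameters of Proposition~\ref{propo}. Since $u_1u_2=2\alpha^{2}\sin(2\theta)$ for the one-position case and $u_1u_2=4\alpha^{2}\cos(2\theta)$ for the two-position case, the choice $\theta^\star=\tfrac12\arctan(2)$ equalizes the two products, giving $\overline{\mathrm{PEP}}_1=\overline{\mathrm{PEP}}_2$ and hence $P_s\sim 3\,\overline{\mathrm{PEP}}_1$. The decisive quantity is the scaling of $e^{2r}\alpha^2$: for $r^\star=\tfrac12\ln(2N+1)$, $\alpha^{\star2}=N(N+1)/(2N+1)$ one has $e^{2r}\alpha^2=N(N+1)\sim N^{2}$, so $(e^{2r}\alpha^2)^{-2g}\sim N^{-4g}$, whereas for $r=0$ the budget $\alpha^2=N$ forces $(e^{2r}\alpha^2)^{-2g}=N^{-2g}$. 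Matching $P_s\sim C_1 N^{-d}$ yields $d=4g$ for $r>0$ and $d=2g$ for $r=0$; reading off the constant gives $G_{c,\max}=C_1^{-1/d}$, and because $G_c=C^{-1/d}$ decreases in $C$, the minimality of $C_1$ at the Proposition~\ref{propo} design makes this the \emph{maximum} coding gain.

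The main obstacle I anticipate is rigorously interchanging the limit $N\to\infty$ with the $\vartheta$-integral and showing that the single-term tail $A\,s^{-g}$ dominates uniformly. Because $\sin^{-2}\vartheta$ blows up as $\vartheta\to0$, I would need to confirm that this endpoint is suppressed by the $\sin^{4g}\vartheta$ weight so no spurious contribution competes with the bulk, and that the subleading corrections to $\mathcal{L}_I$, together with the cross-terms generated by multiplying two expansions, remain of strictly higher order in $N$ after integration. A dominated-convergence bound on the integrand, uniform in $\vartheta$ over the relevant range of $N$, should close this gap.
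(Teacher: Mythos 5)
Your proposal is correct and follows essentially the same route as the paper's proof (Appendices C and D): the small-argument Bessel asymptotics give $\mathcal L_I(s)\sim \Lambda s^{-g}$, which is substituted into the Craig-form PEP integral, the angular integral is reduced to $B\big(2g+\tfrac12,2g+\tfrac12\big)$ via the duplication formula, the optimal angle equalizes the two PEP classes so that $P_s\approx 3\,\overline{\mathrm{PEP}}_1$, and the scaling of $e^{2r}\alpha^2$ with $N$ ($\sim N^2$ for $r>0$ versus $N$ for $r=0$) yields the stated diversity and coding gains. Your added remarks on the degenerate $\theta=0$ case and on justifying the interchange of the high-SNR limit with the $\vartheta$-integral are sound technical caveats that the paper does not spell out, but they do not change the argument.
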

\begin{proof}
	Appendix~\ref{apC} derives the asymptotic forms of the Gamma-Gamma PDF and its Laplace transform, while Appendix~\ref{apD} develops the framework used to determine the diversity order and the (maximal) coding gain.
\end{proof}
The analysis shows that the diversity and coding gains reach their 
maximum values, $d = 4g$ and $G_{c,\max} = C_{1}^{-1/d}$, when the rotation angle and energy allocation are jointly optimized. This occurs for  $\theta^{\star} = \tfrac{1}{2}\arctan(2)$ and an (almost) equal distribution of photons  between displacement and squeezing, revealing that squeezing not only enhances robustness but effectively doubles the achievable diversity.

It is worth noting that squeezing reduces the variance of the measured quadrature and, under the total photon-number constraint, modifies the scaling of the effective decision statistic $e^{2r}\alpha^2$ with $N$. When both displacement and squeezing scale with $N$, the resulting symbol-error probability exhibits an additional $N^{-2g}$ decay factor, yielding an asymptotic exponent $d=4g$. This effect does not correspond to additional fading paths but rather to an improved SNR exponent, which can be interpreted as an effective or super-diversity gain.

\subsection{Baseline (no rotation, single-slot)}

For comparison purposes, we present here the \textit{baseline scheme}, where no rotation is employed and communication is performed in a single time slot \cite{CHE}. By using the output of the homodyne detection in \eqref{eq:homodyne_output}, SER can be expressed as
\begin{align}
	P_s = \frac{1}{\pi} \int_{0}^{\frac{\pi}{2}} 
	\mathcal{L}_I\!\left( \frac{\eta e^{2r}\alpha^2}{\sin^2(\vartheta)} \right)
	d\vartheta,\label{asy2}
\end{align}
where $\mathcal{L}_I(\cdot)$ is given in Appendix~\ref{apA} (non-asymptotic) and Appendix~\ref{apC} (asymptotic). By applying similar arguments to Proposition~\ref{propo}, the configuration that minimizes the SER probability is identical to that of the QRD scheme.

For the asymptotic analysis, by using the high-SNR form of the Laplace transform and following similar algebraic steps as in the previous section, we find that the optimal configuration again splits the total energy symmetrically between displacement and squeezing. The corresponding diversity and maximum coding gains are given by
\begin{align}
	&d =
	\begin{cases}
		2g, & \text{for } r>0,\\[2pt]
		g, & \text{for } r=0,
	\end{cases}\\[4pt]
	&G_{c,\max} = C_b^{-1/d}, \label{asym4}
\end{align}
where $C_b=\frac{2^{2g-1} \eta^{-g}}{\pi}
\,\frac{\Gamma(|\epsilon-\zeta|)\Gamma(g)}{\Gamma(\epsilon)\Gamma(\zeta)}\,
(\epsilon\zeta)^{g}
B\!\left(g+\tfrac{1}{2},\,g+\tfrac{1}{2}\right)$. Therefore, we observe that the optimal configuration coincides with that of the proposed QRD scheme. Squeezing remains beneficial for the baseline case as well; however, both the diversity and the coding gains are reduced, since only a single fading realization contributes to the decision process. In particular, the baseline scheme achieves half the diversity order and a lower coding gain compared to the rotation-based design.

\begin{figure}
    \centering
    \includegraphics[width=\linewidth]{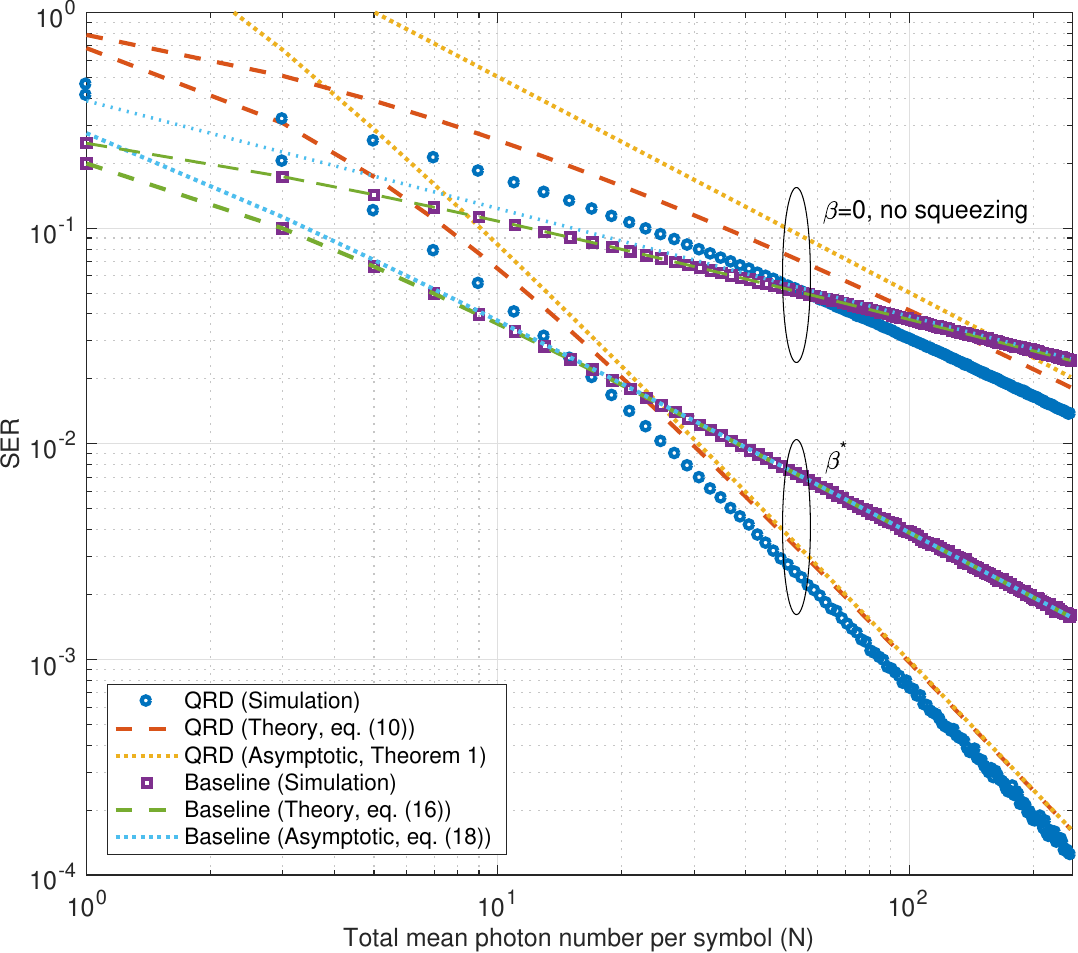}
    \caption{SER performance of QRD and baseline BPSK schemes versus the total mean photon number per symbol $N$; QRD uses an optimal angle $\theta^*=\frac{1}{2}\arctan(2)$.}
    \label{fig1}
\end{figure}

\begin{figure}
\centering
\includegraphics[width=\linewidth]{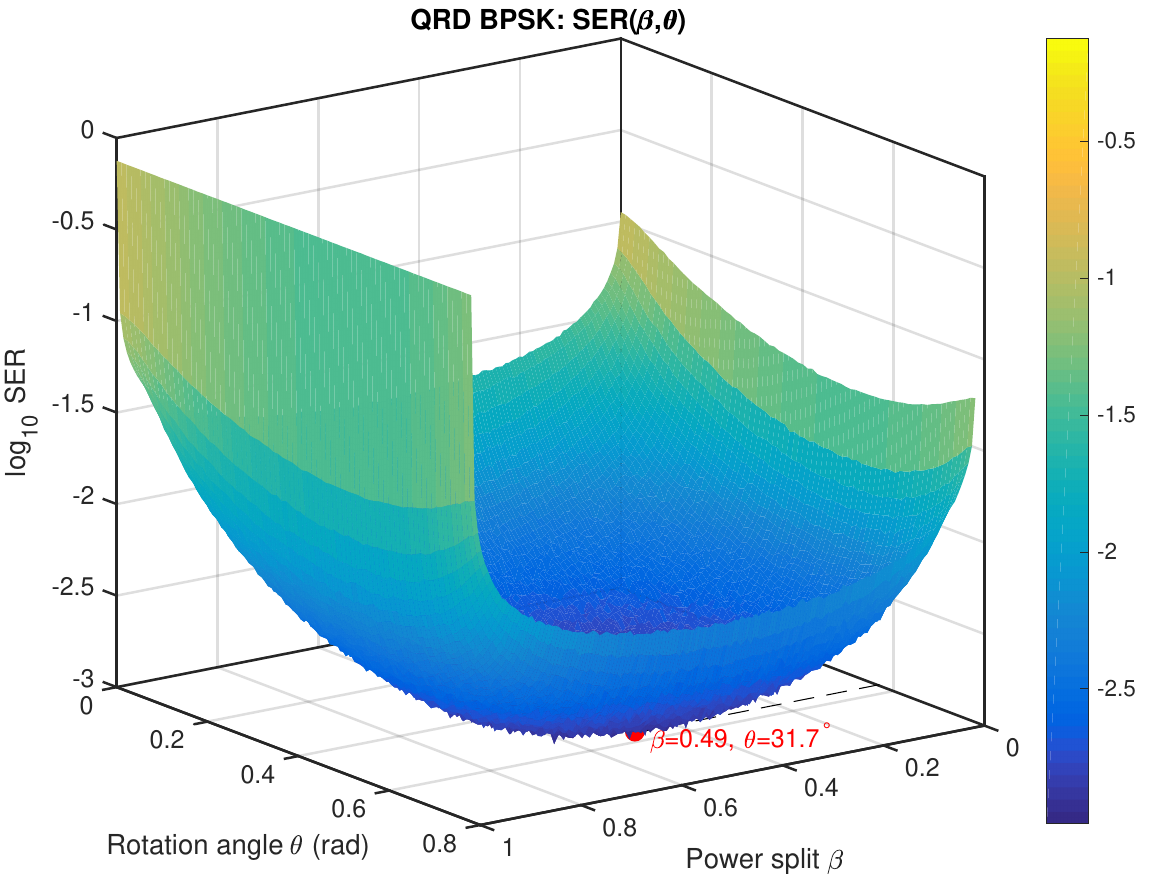}
\caption{SER performance of the proposed QRD scheme versus the rotation angle $\theta$ and the power split $\beta$; $N=80$ photons per symbol.}
\label{fig2}
\end{figure}

\section{Numerical studies}
Computer simulations are carried out to validate the efficiency of the proposed QRD scheme. The simulation setup assumes $\eta = 0.8$, $\epsilon = 0.5$, and $\zeta = 1.2$, corresponding to strong turbulence conditions. 

Fig.~\ref{fig1} illustrates the SER performance versus the total mean number of photons per symbol $N$ for both the QRD scheme and the baseline (single-slot) scheme; an optimal angle $\theta^*=0.5\arctan(2)$ is used for the QRD scheme. For fairness, both schemes are compared under the same total photon (energy) budget per symbol. Two configurations are considered {\it i.e.,} the optimal $\beta^*$ and $\beta = 0$ (no squeezing). The first key observation is that the proposed QRD scheme outperforms the baseline at high values of $N$ ({\it i.e.,} high SNRs) and achieves a diversity order twice that of the baseline. Specifically, the optimal $\beta^*$ yields a diversity order of $4g$, while the baseline scheme is limited to $2g$. When squeezing is not employed ($\beta = 0$), both the rotation-based and baseline schemes lose half of their diversity, resulting in $2g$ and $g$, respectively. This confirms the beneficial role of squeezing in Gamma--Gamma fading channels, as it effectively provides an additional diversity branch for both schemes. It can be also observed that for a low number of photons ({\it i.e.,} low SNRs $N<20$), the system performance is dominated by quantum noise, rendering the QRD scheme less efficient.
Finally, the theoretical results derived from the analytical expressions in~\eqref{eq:pep_avg} and~\eqref{asy2}, as well as the asymptotic formulations in Theorem~\ref{theo1} and~\eqref{asym4}, closely match the simulation outcomes, thereby validating the proposed mathematical framework.

Fig.~\ref{fig2} plots the SER of the QRD scheme versus the rotation angle $\theta$ and the power split $\beta$. The setup assumes $N=80$ photons per symbol ({\it i.e.,} $40$ photons per slot). We can observe the critical impact of both parameters on the SER performance, which exhibits a convex behaviour. The minimum value is achieved near $\beta^* \approx N/(2N+1) = 0.49$ and $\theta^* \approx 0.5\arctan(2) = 31.72^\circ$ which validate our theoretical results in Proposition \ref{propo}.

\section{Conclusion}

This letter introduces, for the first time in the literature, the concept of QRD, which achieves diversity gains over Gamma--Gamma fading channels without any loss in spectral efficiency. Analytical and simulation results consistently demonstrate the performance advantages of the proposed scheme and reveal the super-diversity effects induced by the squeezing process. The proposed QRD framework represents a promising solution for quantum optical communication systems with stringent reliability requirements, while future work may extend the approach to more comprehensive FSO channel models and alternative receiver or CSI-limited scenarios.

\appendices 
\section{Laplace transform of the Gamma--Gamma distribution}\label{apA}

The Laplace transform of a Gamma--Gamma random variable is given by 
\begin{align}
	\mathcal L_I(s)=
	\frac{2(\epsilon\zeta)^{\frac{\epsilon+\zeta}{2}}}
	{\Gamma(\epsilon)\Gamma(\zeta)}
	\int_{0}^{\infty}
	e^{-sz}\,z^{\frac{\epsilon+\zeta}{2}-1}
	K_{\epsilon-\zeta}\big(2\sqrt{\epsilon\zeta z}\big) dz.\label{int1}
\end{align}
By combining the expressions in \cite[6.643.3]{GRA} and \cite[9.220.4]{GRA} with
$\mu=\tfrac{\epsilon+\zeta-1}{2}$,
$\nu=\tfrac{\epsilon-\zeta}{2}$,
$\alpha=s$, and $b=\sqrt{\epsilon\zeta}$, the expression in \eqref{int1} becomes
\begin{align}
	\mathcal L_I(s)=
	\Big(\frac{\epsilon\zeta}{s}\Big)^{\epsilon}
	\Psi \left(\epsilon,\epsilon+1-\zeta;\frac{\epsilon\zeta}{s} \right),
\end{align}
where $\Psi(\cdot,\cdot;\cdot)$ denotes the confluent hypergeometric function of the second kind (Tricomi function) \cite[9.210]{GRA}.

\section{Minimize PEP based on Laplace transformation properties}\label{apB}

Let $I\!\ge0$ with Laplace transform $\mathcal{L}_I(s)=\mathbb{E}[e^{-sI}]$, $s>0$. From the basic properties of the Laplace transformation, $\mathcal{L}_I$ is strictly decreasing and \emph{log-convex} in $s$. For the two-slot BPSK code and rotation $\pmb{R}(\theta)$, let $\pmb{u}=\pmb{R}(\theta)(\pmb{x}-\tilde{\pmb{x}})=[u_1\,u_2]^T$.
Due to symmetry, it is sufficient to study one representative pair from each PEP class; for codeword pairs that differ in one symbol position $(u_1,u_2)=(2\alpha\cos\theta,2\alpha\sin\theta)$, and for codeword pairs that differ in two symbol positions $(u_1,u_2)=\big(2\alpha(\cos\theta-\sin\theta),2\alpha(\sin\theta+\cos\theta)\big)$. Since $\pmb{R}(\theta)$ is orthonormal, we note that $u_1^2+u_2^2=\|\pmb{x}-\tilde{\pmb{x}}\|^2=4\alpha^2$.
Using Craig's formula, the PEP integrand in \eqref{eq:pep_avg} at angle $\vartheta$ is
\begin{align}
	\mathcal{I}(\vartheta;\theta)=\mathcal{L}_I(s_1)\,\mathcal{L}_I(s_2), \qquad
	s_i=\frac{\eta e^{2r}u_i^2}{2\sin^2\vartheta},\ i=1,2,
\end{align}
so that for each fixed $\vartheta$, the sum
\begin{align}
	s_1+s_2=\frac{\eta e^{2r}}{2\sin^2\vartheta}(u_1^2+u_2^2)
	=\frac{2\eta e^{2r}\alpha^2}{\sin^2\vartheta}, \label{op2}
\end{align}
is independent of $\theta$. By log-convexity, for fixed $S=s_1+s_2$ we have $\mathcal{L}_I(s_1)\mathcal{L}_I(s_2)\ge[\mathcal{L}_I(S/2)]^2$,
with equality if and only if $s_1=s_2$. Therefore, to minimize the PEP, we need to minimize the imbalance $|s_1-s_2|$, which is equivalent to maximizing the product
\begin{align}
	s_1s_2=\frac{S^2-(s_1-s_2)^2}{4}.
\end{align}
For the two codeword-pair classes, we have $s_1s_2\propto u_1^2u_2^2 \propto \sin^2(2\theta)$ (one position), and $s_1s_2 \propto 4\cos^2(2\theta)$ (two positions). Therefore, to minimize the worst PEP across both pairs (min-max design), we enforce $\sin^2(2\theta)=4\cos^2(2\theta)\Rightarrow \theta^*=\tfrac{1}{2}\arctan(2)$.

As for the optimal power allocation, minimizing the average PEP is equivalent to maximizing the term $e^{2r}\alpha^2$ in \eqref{op2} subject to $\alpha^2+\sinh^2 r\le N$. To simplify the computation, we define the fraction of total energy allocated to squeezing $\beta=\frac{\sinh^2(r)}{N}$ and thus $e^{2r}\alpha^2=(\sqrt{\beta N}+\sqrt{\beta N+1})^2(1-\beta)N$, where $\sinh^2(r)=\beta N\Rightarrow \sinh(r)=\pm \sqrt{\beta N}\Rightarrow r=\pm \ln(\sqrt{\beta N}+\sqrt{\beta N+1})\Rightarrow r=\ln(\sqrt{\beta N}+\sqrt{\beta N+1})$ since $r\geq 0$. Therefore, the optimal power split corresponds to the optimization problem 
\begin{align}
	\beta^*=\arg\max_{\beta \in [0\;1]}(\sqrt{\beta N}+\sqrt{\beta N+1})^2(1-\beta).
\end{align} 
By the first-order optimality condition, we set the derivative to zero and solve for the stationary point, which yields
\begin{align} 
	\beta^*=\frac{N}{2N+1}~.
\end{align}
As $N\rightarrow \infty$, $\beta^*\rightarrow 1/2$ which is consistent with the asymptotic maximization in Appendix \ref{apD}.

\section{Asymptotic expressions for high SNRs}\label{apC}

For $z\rightarrow 0$,  the modified Bessel function is simplified to $K_{\epsilon-\zeta}(z)
\sim
\frac{1}{2}\Gamma(|\epsilon-\zeta|)
\left(\frac{z}{2}\right)^{-|\epsilon-\zeta|}$ \cite{GRA}. Therefore, the Gamma-Gamma PDF is simplified to 
\begin{align}
	f_I(z)\sim 
	\frac{\Gamma(|\epsilon-\zeta|)}{\Gamma(\epsilon)\Gamma(\zeta)}\;
	(\epsilon\zeta)^{g}\;
	z^{g-1},\;\;\text{for}\; \epsilon\neq \zeta
\end{align}
where $g=\min\{\epsilon,\zeta\}$. 
By using the above asymptotic expressions, the (asymptotic with $s\rightarrow \infty$) Laplace transformation of the Gamma-Gamma distribution simplifies to 
\begin{align}
	\mathcal L_I(s)&\sim
	\frac{\Gamma(|\epsilon-\zeta|)\,\Gamma(g)}{\Gamma(\epsilon)\Gamma(\zeta)}
	(\epsilon\zeta)^{g}s^{-g},\;\textrm{for}\;\epsilon\neq \zeta, \nonumber \\
	&\triangleq \Lambda s^{-g},\label{w1}
\end{align}
where \eqref{w1} uses the expression in \cite[3.351.3]{GRA}. By using the above asymptotic expression, the average PEP in ~\eqref{eq:pep_avg} is simplified as follows 
\begin{align}
	&\overline{\mathrm{PEP}}\sim \frac{2^{4g} (u_1^2u_2^2)^{-g} \Lambda^2}{\pi}(\eta\,e^{2r})^{-2g}
	\int_{0}^{\pi/2}\!\big(\sin^2\vartheta\big)^{2g}\,d\vartheta \nonumber \\
	&=\frac{2^{8g-1} \Lambda^2 (u_1'^2 u_2'^2)^{-g}\eta^{-2g}}{\pi}B\!\left(2g+\frac{1}{2},2g+\frac{1}{2} \right)\! \big(e^{2r}\alpha^2\big)^{-2g}, \label{div0}\\
	&=\frac{2^{6g-1} \Lambda^2 (\eta \sin(2\theta^\star))^{-2g}}{\pi}B\!\left(2g+\frac{1}{2},2g+\frac{1}{2} \right) \nonumber  \\
	&\;\;\;\;\;\times \big(e^{2r}\alpha^2\big)^{-2g}\;\;\triangleq C_0 \big(e^{2r}\alpha^2\big)^{-2g}, \label{div1}
\end{align}
where the above integration relies on \cite[3.621.3]{GRA}, and $u_i'=u_i/\alpha$ (normalized values). From \eqref{div0}, we observe that minimizing the PEP is equivalent to maximizing the product $u_1'^2 u_2'^2$, which is exactly the same criterion as in the non-asymptotic case (see Appendix~\ref{apB}). By the same argument used there, the optimal rotation is $\theta^\star=\tfrac{1}{2}\arctan(2)$, which minimizes the error probability in the asymptotic regime as well; consequently, $u_1'^2 u_2'^2=4 \sin^2\!\big(2\theta^\star\big)$.

\section{Diversity and coding gain}\label{apD}

For the optimal angle configuration, we have $\overline{\mathrm{PEP}}_1=\overline{\mathrm{PEP}}_2=\overline{\mathrm{PEP}}$ and thus $P_s\approx 3\overline{\mathrm{PEP}}= 3C_0(e^{2r}\alpha^2)^{-2g}\triangleq C_1(e^{2r}\alpha^2)^{-2g}$.
Given the channel energy $N$, we define the fraction of squeezing $\beta=\frac{\sinh^2(r)}{N}$. In this case, by using the expression in \eqref{div1}, $P_s$ can be written as 
\begin{align}
	P_s&\sim C_1\big((\sqrt{\beta N}+\sqrt{\beta N+1})^2(1-\beta)N\big)^{-2g},  \nonumber \\
	&\sim \begin{cases}
		C_1\big(4\beta (1-\beta)\big)^{-2g}N^{-4g}, & \text{for } \beta > 0,\\
		C_1 N^{-2g}, & \text{for } \beta = 0~.
	\end{cases}
\end{align}
Therefore, the diversity gain is given by 
\begin{align}
	d &= -\lim_{N\rightarrow \infty} \frac{\log P_s(N)}{\log N}
	=\begin{cases}
		4g, & \text{for } \beta > 0,\\
		2g, & \text{for } \beta = 0~.
	\end{cases}
\end{align}
So we can see that squeezing affects diversity gain, and $\beta>0$ ensures the maximum achievable diversity gain equal to $d=4g$, and a coding gain $G_c=C_1^{-1/d}\big(4\beta (1-\beta)\big)^{2g/d}$. Therefore, the optimal energy allocation (that maximizes both the diversity and the coding gains) relays to the following optimization problem 
\begin{align}
	&\max_{\beta}(1-\beta)\beta\;\;\;\textrm{subject to}\;0<\beta<1,
\end{align}
which gives $\beta^*=\frac{1}{2}$ and therefore $G_{c,\max}=C_1^{-1/d}$.

\bibliographystyle{IEEEtran}
\bibliography{references}

@PREAMBLE{
 "\providecommand{\noopsort}[1]{}" 
 # "\providecommand{\singleletter}[1]{#1}%" 
}

@book{DJO,
  author    = {I. B. Djordjevic},
  title     = {Quantum Communications, Quantum Networks, and Quantum Sensing},
  publisher = {Academic Press, Elsevier},
  year      = {2022}
}

@inproceedings{JUN,
  author    = {P. Jung and K. Dini and H. Almujahed},
  title     = {Applying turbo codes to wireless quantum optical communications via multipath channels causing intersymbol interference},
  booktitle = {Proc. IEEE Int. Conf. Quantum Commun. Net. Comp.},
  address   = {Nara, Japan},
  month     = {March},
  year      = {2025},
  pages     = {24--31}
}

@article{LAZ,
	author  = {L. Hanzo and Z. Babar and Z. Cai and D. Chandra and I. B. Djordjevic and B. Koczor and S.-X. Ng and M. Razavi and O. Simeone},
	title   = {Quantum Information Processing, Sensing, and Communications: Their Myths, Realities, and Futures},
	journal = {Proc. IEEE},
	year    = {2024},
	note    = {Early access}
}

@inproceedings{DJO2,
	author    = {I. B. Djordjevic and V. Nafria},
	title     = {Entanglement-Assisted Quantum Radar Demonstration over Turbulent Free-Space Optical Channels},
	booktitle = {Proc. Asia Commun. Photon. Conf./Int. Photon. Opt. Meetings},
	address   = {Wuhan, China},
    month     = {Nov.},
    year      = {2023},
    pages     = {1--4}
}

@book{HAM,
	author    = {J. R. Hampton},
	title     = {Introduction to {MIMO} Communications},
	publisher = {Cambridge University Press},
	address   = {Cambridge, U.K.},
	year      = {2014}
}

@article{KRI,
  author    = {I. Krikidis},
  title     = {Tesla Meets Helstrom: A Wireless-Powered Quantum Optical System},
  journal   = {IEEE Wireless Commun. Lett.},
  volume    = {14},
  pages     = {3957--3961},
  month     = {Dec.},
  year      = {2025}
}

@article{BHA,
  author    = {M. R. Bhatnagar and M. K. Arti and K. Bhatnagar},
  title     = {Quantum Relay-Assisted Free-Space Optical Communication},
  journal   = {IEEE Photonics Journal},
  volume    = {17},
  pages     = {2000215},
  year      = {2025}
}

@article{CHE,
  author    = {G. Chesi and S. Olivares and M. G. A. Paris},
  title     = {Squeezing-Enhanced Phase-Shift-Keyed Binary Communication in Noisy Channels},
  journal   = {Physical Review A},
  volume    = {97},
  pages     = {032315},
  year      = {2018}
}

@article{CHA,
  author    = {S. Koudia and L. Oleynik and J. U. Rehman and S. Chatzinotas},
  title     = {Spatial-Mode Diversity and Multiplexing for Continuous Variables Quantum Communications},
  journal   = {Communications Physics},
  volume    = {8},
  pages     = {351},
  year      = {2025}
}

@article{BOU,
  author    = {E. Viterbo and J. Boutros},
  title     = {Signal Space Diversity: A Power- and Bandwidth-Efficient Diversity Technique for the Rayleigh Fading Channel},
  journal   = {IEEE Trans. Inf. Theory},
  volume    = {44},
  number    = {4},
  pages     = {1453--1467},
  month     = {July},
  year      = {1998}
}

@book{GRA,
  author    = {I. S. Gradshteyn and I. M. Ryzhik},
  title     = {Table of Integrals, Series, and Products},
  edition   = {8},
  publisher = {Elsevier},
  year      = {2014}
}

@book{TSE,
  author    = {David Tse and Pramod Viswanath},
  title     = {Fundamentals of Wireless Communication},
  publisher = {Cambridge University Press},
  year      = {2005},
  address   = {Cambridge, UK},
  isbn      = {978-0-521-84527-4},
}

\begin{IEEEbiography}[{\includegraphics[width=1in,height=1.25in,clip,keepaspectratio]{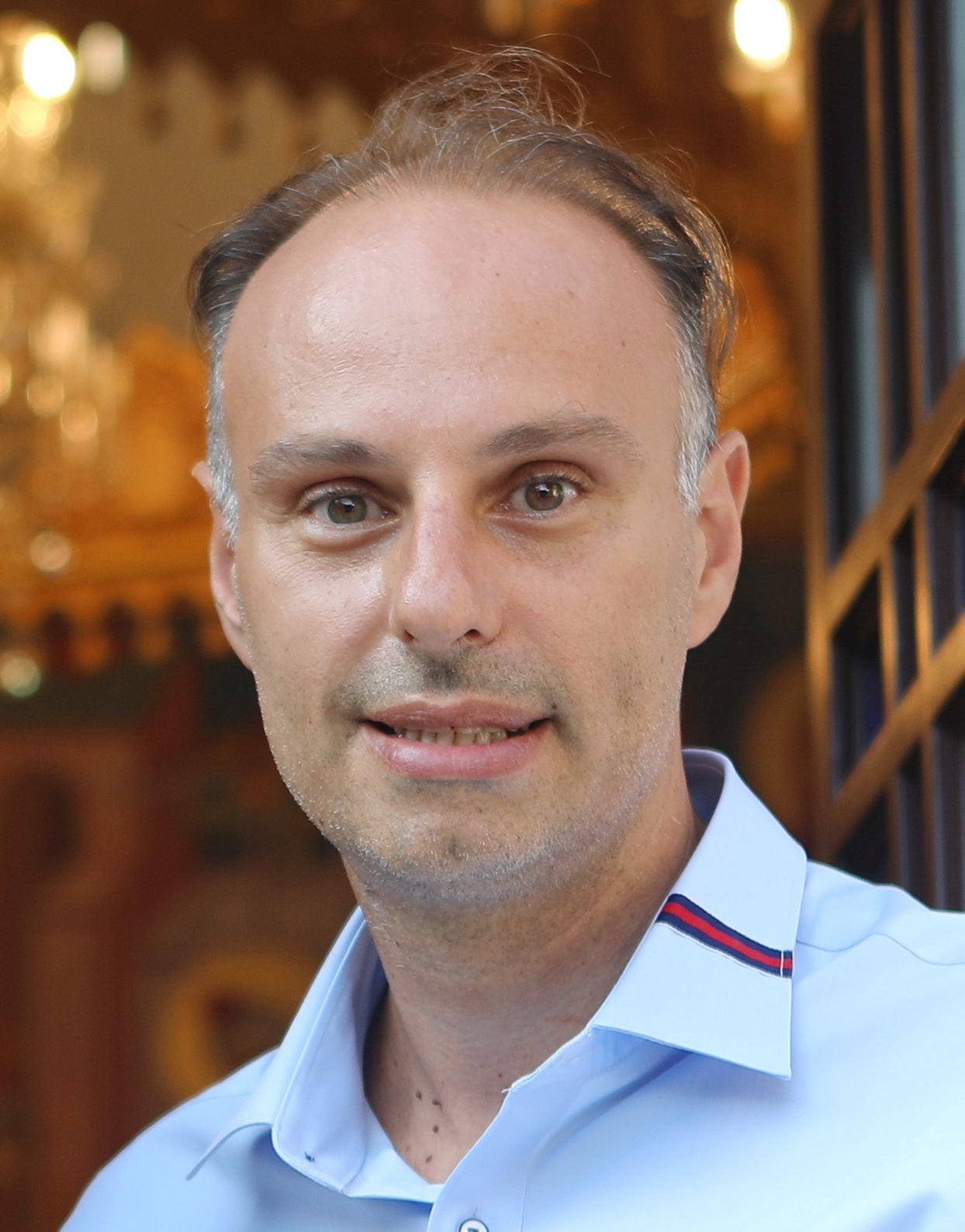}}]{Ioannis Krikidis} (F?19) received the diploma in Computer Engineering from the Computer Engineering and Informatics Department (CEID) of the University of Patras, Greece, in 2000, and the M.Sc and Ph.D degrees from \'Ecole Nationale Sup\'erieure des T\'el\'ecommunications (ENST), Paris, France, in 2001 and 2005, respectively, all in Electrical Engineering. From 2006 to 2007 he worked, as a Post-Doctoral researcher, with ENST, Paris, France, and from 2007 to 2010 he was a Research Fellow in the School of Engineering and Electronics at the University of Edinburgh, Edinburgh, UK.
	
He is currently a Professor at the Department of Electrical and Computer Engineering, University of Cyprus, Nicosia, Cyprus. His current research interests include wireless communications, quantum computing, 6G communication systems, wireless powered communications, and intelligent reflecting surfaces. Dr. Krikidis serves as an Associate Editor for IEEE Transactions on Wireless Communications, and Editor in Chief for Frontiers in Communications and Networks. He was the recipient of the Young Researcher Award from the Research Promotion Foundation, Cyprus, in 2013, and the recipient of the IEEEComSoc Best Young Professional Award in Academia, 2016, and IEEE Signal Processing Letters best paper award 2019. He has been recognized by the Web of Science as a Highly Cited Researcher for 2017-2021. He has received the prestigious ERC Consolidator Grant for his work on wireless powered communications.
\end{IEEEbiography}

\EOD

\end{document}